\numberwithin{equation}{section}
\newtheorem{theorem}{Theorem}
\newtheorem{lemma}{Lemma}
\newtheorem{definition}{Definition}
\newtheorem{problem}{Problem}
\newcommand{\e}{\mathrm{e}}
\newcommand{\Tr}{\mathrm{Tr}}
\newcommand{\ve}{\varepsilon}
\newcommand{\C}{\mathbb{C}}
\newcommand*{\bigo}{\ensuremath{O}\xspace}
\newcommand{\ketbra}[2]{|#1\rangle\! \langle #2|}
\DeclarePairedDelimiterXPP\lnorm[1]{}{\lVert}{\rVert}{}{#1}
\DeclarePairedDelimiter{\opnorm}{\lVert}{\rVert}
\DeclarePairedDelimiter{\ceil}{\lceil}{\rceil}
\NewDocumentCommand{\PARITY}{o}{%
\ensuremath{%
    \textsc{PARITY}\IfNoValueTF{#1}{}{_{\,\mathclap{#1}\,}}%
}\xspace%
}
\newcommand*{\PAR}{\ensuremath{\textsc{PAR}}}
\NewDocumentCommand{\OR}{o}{%
\ensuremath{%
    \textsc{OR}\IfNoValueTF{#1}{}{_{#1}}%
}\xspace%
}
\NewDocumentCommand{\PARITYOR}{oo}{%
\ensuremath{
    \PARITY[#1]\,\circ\,\OR[#2]%
}\xspace%
}
\title{Sparsity-dependent Complexity Lower Bound of\\ Quantum Linear System Solvers}
\author[1,2]{Hitomi Mori\thanks{\href{mailto:hitomi.mori@quantinuum.com}{hitomi.mori@quantinuum.com}}}
\author[1,3]{Yuta Kikuchi}
\author[4]{Marcello Benedetti}
\author[4]{Matthias Rosenkranz}
\affil[1]{Quantinuum K.K., Tokyo, Japan}
\affil[2]{Centre for Quantum Technologies, National University of Singapore, Singapore}
\affil[3]{Center for Interdisciplinary Theoretical and Mathematical Sciences, RIKEN, Wako, Japan}
\affil[4]{Quantinuum Ltd., London, United Kingdom}
\begin{document}

\date{January 26, 2026}

\maketitle

\vspace{-.5cm}

\begin{abstract}
Quantum linear system (QLS) solvers are a fundamental class of quantum algorithms used in many potential quantum computing applications, including machine learning and solving differential equations.
The performance of quantum algorithms is often measured by their query complexity, which quantifies the number of oracle calls required to access the input. The main parameters determining the complexity of QLS solvers are the condition number $\kappa$ and sparsity $s$ of the linear system, and the target error $\epsilon$.
To date, the best known query-complexity lower bound is $\Omega(\kappa\log(1/\epsilon))$, which establishes the optimality of the most recent QLS solvers.
The original proof of this lower bound is attributed to Harrow and Kothari, but their result is unpublished.
Furthermore, when discussing a more general lower bound including the sparsity $s$ of the linear system, it has become folklore that it should read as $\Omega( \kappa \sqrt{s}\log(1/\epsilon))$. 
In this work, we establish the rigorous lower bound capturing the sparsity dependence of QLS.
We prove the lower bound of $\Omega(\kappa\sqrt{s})$ for any quantum algorithm that solves QLS with constant error.
While the dependence on all parameters $\kappa,s,\epsilon$ remains an open problem, our result provides a crucial stepping stone toward the complete characterization of QLS complexity.
\end{abstract}

\section{Introduction}
The study of query complexity lower bounds is key to the understanding of the theoretical limits of computation. When the complexity lower bound of a certain problem matches the complexity upper bound of a known algorithm, this establishes the asymptotic optimality of that algorithm. In quantum computing, one of the most studied problems is the quantum linear system (QLS) problem. 
The QLS problem consists of finding an approximation $\ket{\tilde{x}}$ of the normalized quantum state $\ket{x}=A^{-1}\ket{b}/\lnorm{A^{-1}\ket{b}}$ that satisfies $\lnorm{\ket{\tilde{x}} - \ket{x}} \le \epsilon$ for some  $\epsilon>0$.
Here, $A$ is an invertible $N \times N$ matrix, $\ket{b}$ is a quantum state, and $\lnorm{\cdot}$ is the $\ell^2$-norm.
Thus $\ket{\tilde{x}}$ encodes an approximate solution to the linear system of equations $Ax = b$. 
The most relevant parameters that determine the complexity of QLS are the condition number $\kappa$ and sparsity $s$ of the matrix $A$, and the target error $\epsilon$.

QLS was introduced and solved almost two decades ago by Harrow, Hassidim and Lloyd~\cite{Harrow_2009}, and since then new quantum algorithms have been discovered and refined to the point that they achieve query complexity $\bigo(\kappa \log(1/\epsilon))$ with respect to the block-encoding oracle~\cite{Costa_2021,Dalzell_2024}. This upper bound is considered optimal in both the condition number $\kappa$ and the error $\epsilon$ because it matches a `known' lower bound of $\Omega(\kappa \log (1/\epsilon))$. The original proof of this lower bound is attributed to Harrow and Kothari, but the result is unpublished. According to several authors, a manuscript has been in preparation since at least 2018, as evidenced by the bibliographic entries in their research papers, e.g., Refs.~\cite{Low_2019,Costa_2021,Dalzell_2022,Costa_2023,Wang_2024,An2025,Chen_2025}. Furthermore, when discussing a more general lower bound that includes the sparsity $s$ of the linear system, it has become folklore that it should read as $\Omega( \kappa \sqrt{s} \log(1/\epsilon))$ also due to Harrow and Kothari. For example, this is mentioned in a recent review on QLS solvers~\cite{Morales_2024}, although the authors acknowledge that the role of sparsity remains an open question.

The aim of this work is to clarify some of the claims in the aforementioned literature. First, we give a proof of the $\Omega(\kappa \log(1/\epsilon))$ lower bound via a reduction from the \PARITY problem in the \emph{unbounded} error setting. This is not a new result, because a proof strategy was discussed in~\cite[Appendix A]{Costa_2023}. However, we believe a detailed and explicit derivation will be useful to researchers in quantum algorithms. Second, we prove the $\Omega(\kappa\sqrt{s})$ lower bound via a reduction from the \PARITYOR problem in the \emph{bounded} error setting. With this setting, we can only establish a lower bound for $\epsilon=\bigo(1)$, so the $\epsilon$-dependence does not appear in the expression. To the best of our knowledge, this result is new. We believe it is an important step towards understanding the role of sparsity in the query complexity of the QLS problem and towards a joint lower bound in all three parameters $\kappa$, $s$, and $\epsilon$.

The remainder of the paper is organized as follows. In Section~\ref{sec:preliminaries}, we provide background material relevant to our proofs, such as the data access model, the \PARITY problem and the \PARITYOR problem. 
In Section~\ref{sec:precision}, we give the detailed proof for the $\Omega(\kappa \log (1/\epsilon))$ lower bound.
Our main result is presented in Section~\ref{sec:sparsity}, where we prove the sparsity-dependent $\Omega(\kappa\sqrt{s})$ lower bound.
Lastly, we discuss the challenges and possible extensions towards a joint lower bound that captures all parameters $\kappa,s,\epsilon$ in Section~\ref{sec:discussion}.

\section{Preliminaries}
\label{sec:preliminaries}

For vectors $x$, we use the $\ell^2$-norm $\lnorm{x}=\sqrt{x\cdot x}$ and denote by $\ket{x}$ a quantum state encoding $x$ satisfying $\lnorm{\ket{x}}=1$. For matrices $A$, we use the spectral norm $\opnorm{A}=\max_{\lnorm{x}\ne0}\lnorm{Ax}/\lnorm{x}$.
For integer variables $a$ and $b$, we define $[a,b]:=\{a,a+1,\dots, b\}$, and we simply write $[a]:=[1,a]$. We use the Bachmann-Landau notation $\Omega$ and $O$ for asymptotic lower and upper bounds, respectively, and $\Theta$ for a simultaneous lower and upper bound (tight bound). Oracles are denoted by calligraphic letters such as $\mathcal{P}$ and $\mathcal{O}$.

\begin{problem}[Quantum Linear System]
\label{prob:QLS}
    Let $A\in \mathbb{C}^{N\times N}$ be an invertible matrix with condition number $\kappa:=\opnorm{A}\,\opnorm{A^{-1}}$ and sparsity $s$, i.e., it has at most $s$ non-zero entries in each row and column. Let $\ket{b}$ be a quantum state and let $\epsilon>0$ be the target error.
    Prepare an approximation $\ket{\tilde{x}}$ of the normalized quantum state 
    \begin{align}\label{eq:QLS_x}
        \ket{x}=\frac{A^{-1}\ket{b}}{\lnorm{A^{-1}\ket{b}}}
    \end{align}
    that satisfies $\lnorm{\ket{\tilde{x}} - \ket{x}} \le \epsilon$.
\end{problem}
\noindent Note that by the singular-value decomposition we have $A=\sum_{i=1}^{N} \sigma_i \ketbra{v_i}{w_i}$ and the notation in Eq.~\eqref{eq:QLS_x} means $A^{-1}\ket{b}=\sum_{i=1}^{N}\sigma_i^{-1} \braket{v_i|b} \ket{w_i}$.

To discuss the query complexity of the QLS problem we begin by specifying the access model for an input matrix~$A$.
In this paper, we assume that the cost of preparing the initial state $\ket{b}$ is negligible compared to the cost of computing $A^{-1}$, and we therefore focus on query access to the matrix $A$.

\begin{definition}[Sparse-matrix oracles \cite{Berry_2015,Berry_2012}]
\label{def:A_oracle}
    Given a matrix $A$ which is $s$-sparse, a quantum sparse-matrix-access $\mathcal{P}_A$ is given by a pair of oracular functions
    \begin{align}
    \mathcal{P}_A:=\left(\mathcal{P}_A^{\rm pos}, \mathcal{P}_A^{\rm val}\right)
    \end{align}
    where $\mathcal{P}_A^{\rm pos}$ and $\mathcal{P}_A^{\rm val}$ specify the positions of the (potentially) non-zero entries of $A$ and the values of those entries, respectively, i.e., for $i,j \in[N], \nu \in[s]$, and $z \in\{0,1\}^*$,
    \begin{align}
    &\mathcal{P}_A^{\rm pos}:|i, \nu\rangle \mapsto|i, j(i, \nu)\rangle,\\
    &\mathcal{P}_A^{\rm val}:|i, j, z\rangle \mapsto\left|i, j, A_{i, j} \oplus z\right\rangle.
    \end{align}
    Here $A_{i, j} \oplus z \in\{0,1\}^*$ denotes a bit string of arbitrary fixed length that encodes the value $A_{i, j} \in \mathbb{C}$. The matrix element $A_{i,j}$ is represented in binary and $\oplus$ denotes bitwise XOR. The function $j(i, \nu)$ returns the column index of the $\nu$-th non-zero entry in row $i$ or the column index of any zero entry in row $i$ if there are fewer than $\nu$ non-zero entries in row $i$.
\end{definition}

In our proofs, we leverage known tight complexity bounds of Boolean problems. A Boolean problem takes $n$ Boolean variables $X=(x_1,\dots,x_n)\in\{0, 1\}^n$ and asks to return $f(X)\in\{0,1\}$ for a Boolean function $f$. Boolean functions are called \textit{partial} when they are only defined on a subset of $\{0, 1\}^n$, and otherwise are called \textit{total}. A quantum algorithm solves these problems by preparing a quantum state $\rho$ and then performing a projective measurement with elements $\{\Pi, I-\Pi\}$, where an outcome of 1 (corresponding to the projection with $\Pi$) suggests that $f(X)=1$, while an outcome of 0 (corresponding to the projection with $I -\Pi$) suggests that $f(X)=0$.

\begin{problem}[Bounded/unbounded-error quantum problem]
\label{def:boolean_problem}
    Let $f$ be a total or partial Boolean function and ${\Pi}$ be a projection operator.
    For any allowed input $X$ and for some $\ve > 0$, create a density matrix $\rho$ such that
    \begin{align}
    \label{eq:unbounded_parity_even}
        \Tr[\rho{\Pi}] \geq \frac{1}{2} + \ve
    \end{align}
    if $f(X)=1$, and
    \begin{align}
    \label{eq:unbounded_parity_odd}
        \Tr[\rho{\Pi}] \leq \frac{1}{2} - \ve
    \end{align}
    if $f(X)=0$.
    We say it is a bounded-error quantum problem when $\ve = \Omega(1/{\rm poly}(n))$ is required. We say it is an unbounded-error quantum problem when $\ve$ can be any positive value.
\end{problem}

The first Boolean problem of interest is called \PARITY.
Provided access to $X=(x_1,\dots,x_n)$, the \PARITY[n] problem asks to return the value $f_{\PAR}(X):=\oplus_{t=1}^{n}x_t$.
In our quantum computing setup, we consider the access model
\begin{align}
\label{eq:access_parity}
    \mathcal{O}^{(1)}_{X}\ket{k,t} = \ket{k\oplus x_t,t}
    \quad\text{for }
    k\in\{0,1\} \text{ and } t\in[n].
\end{align}
The following two lemmata show that the quantum query complexity for the \PARITY problem with access to the oracle $\mathcal{O}_X^{(1)}$ is the same in the bounded- and unbounded-error setting.
\begin{lemma}[Tight bound of bounded-error \PARITY \cite{Farhi_1998}]\label{lem:parity}
    The bounded-error quantum query complexity $\mathsf{Q}$ of \PARITY[n] is $\mathsf{Q}(\PARITY[n]) = \Theta(n)$.
\end{lemma}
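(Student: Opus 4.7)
The plan is to prove the upper and lower bounds separately. For the upper bound $\mathsf{Q}(\PARITY[n]) = O(n)$, I would exhibit a direct algorithm: using an ancilla initialized to $\ket{0}$ together with a running XOR register, query $\mathcal{O}_X^{(1)}$ on $\ket{0,t}$ for each $t\in[n]$, write $x_t$ into the register via CNOT, then uncompute the ancilla with a second query. This computes $\bigoplus_t x_t$ deterministically and uses $O(n)$ queries, so zero-error suffices.

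For the lower bound $\mathsf{Q}(\PARITY[n]) = \Omega(n)$, I would invoke the polynomial method of Beals--Buhrman--Cleve--Mosca--de~Wolf. The key structural fact is that the acceptance probability $p(X)=\Tr[\rho(X)\Pi]$ of any $T$-query quantum algorithm, viewed as a function of the oracle input $X\in\{0,1\}^n$, is a real multilinear polynomial of total degree at most $2T$. Since the bounded-error hypothesis gives $p(X)\geq \tfrac12+\ve$ when $f_{\PAR}(X)=1$ and $p(X)\leq \tfrac12-\ve$ otherwise with $\ve=\Omega(1)$, the polynomial $p$ approximates $\PAR$ pointwise. A minor preliminary step is to check that the bit-flip oracle $\mathcal{O}_X^{(1)}$ of Eq.~\eqref{eq:access_parity} leads to the same degree-$2T$ representation as the standard phase oracle; this follows from conjugating the query ancilla by a Hadamard at no extra query cost.

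Next, because $f_{\PAR}$ is a symmetric Boolean function, I would apply Minsky--Papert symmetrization: averaging $p(X)$ over all $X$ with a fixed Hamming weight $k$ yields a univariate polynomial $q(k)$ on $\{0,1,\dots,n\}$ of degree at most $2T$ that satisfies $q(k)\geq \tfrac12+\ve$ for odd $k$ and $q(k)\leq \tfrac12-\ve$ for even $k$ (or vice versa). Hence $q(k)-\tfrac12$ changes sign at least $n$ times as $k$ ranges over consecutive integers in $[0,n]$, so by the intermediate value property $q-\tfrac12$ has at least $n$ real roots, forcing $\deg q \geq n$. Combined with $\deg q \leq 2T$ this gives $T\geq n/2 = \Omega(n)$.

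The main obstacle I anticipate is simply making the polynomial representation rigorous for this particular oracle model rather than the textbook phase-oracle setup; the symmetrization and sign-change counting are then essentially mechanical. Everything else reduces to a careful bookkeeping of how each application of $\mathcal{O}_X^{(1)}$ contributes at most one multiplicative factor of $x_t$ to every amplitude, producing a polynomial of degree one per query in the state vector and thus degree two per query in the measurement probability.
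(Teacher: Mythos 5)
Your proof is correct, but note that the paper does not actually prove Lemma~\ref{lem:parity} at all: it is imported as a known result with a citation to Farhi--Goldstone--Gutmann--Sipser, so there is no in-paper argument to compare against. Your route is the polynomial method of Beals et al.\ (degree-$\le 2T$ acceptance probability, Minsky--Papert symmetrization, and counting the $n$ sign alternations of $q(k)-\tfrac12$ on $\{0,\dots,n\}$ to force $\deg q\ge n$ and hence $T\ge n/2$), whereas the cited original proof proceeds by a different direct argument; both yield the same $\Omega(n)$ bound, and your worry about the bit-flip oracle of Eq.~\eqref{eq:access_parity} is unfounded since the standard degree-per-query lemma is usually stated for exactly that oracle, each query contributing degree at most one to every amplitude. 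Two small observations: your upper bound can be done with exactly $n$ sequential applications of $\mathcal{O}^{(1)}_X$ on a single running register (no uncomputation needed), and, more importantly, your sign-alternation argument never uses the magnitude of $\ve$ --- only $\ve>0$ --- so the identical proof also establishes the unbounded-error statement of Lemma~\ref{lem:parity_unbounded}, which is the version the paper actually relies on in Theorem~\ref{lem:qls_unbounded}. It would be worth stating that stronger conclusion explicitly rather than restricting to $\ve=\Omega(1)$.
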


\begin{lemma}[Tight bound of unbounded-error \PARITY \cite{Farhi_1998}]\label{lem:parity_unbounded}
    The unbounded-error quantum query complexity $\mathsf{UQ}$ of \PARITY[n] is $\mathsf{UQ}(\PARITY[n]) = \Theta(n)$.
\end{lemma}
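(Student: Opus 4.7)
The upper bound is immediate: any algorithm achieving bounded error with bias $\ve = \Omega(1/{\rm poly}(n))$ is, in particular, an unbounded-error algorithm, so $\mathsf{UQ}(\PARITY[n]) \le \mathsf{Q}(\PARITY[n]) = O(n)$ by Lemma~\ref{lem:parity}. The substantive content is the matching lower bound $\mathsf{UQ}(\PARITY[n]) = \Omega(n)$, which I would prove by the polynomial method.

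Consider any quantum algorithm that makes $T$ queries to $\mathcal{O}_X^{(1)}$, prepares a final state $\rho_X$, and measures with $\{\Pi, I-\Pi\}$. The acceptance probability $p(X) := \Tr[\rho_X \Pi]$ is a multilinear polynomial in $(x_1,\dots,x_n)\in\{0,1\}^n$ of total degree at most $2T$; this is the standard observation that each query contributes degree one to the amplitudes while the algorithm's unitaries and $\Pi$ are independent of $X$, so the acceptance probability is a quadratic form in those amplitudes. Because \PARITY is invariant under permutations of its input, I can average $p$ over all inputs of a given Hamming weight and define
\begin{equation*}
    q(k) := \frac{1}{\binom{n}{k}} \sum_{X\in\{0,1\}^n:\, |X|=k} p(X), \qquad k \in [0,n].
\end{equation*}
The Minsky--Papert symmetrization identity then shows that $q$ is a univariate polynomial in $k$ of degree at most $2T$.

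The unbounded-error conditions in Eqs.~\eqref{eq:unbounded_parity_even}--\eqref{eq:unbounded_parity_odd} imply $q(k) > 1/2$ when $k$ is odd and $q(k) < 1/2$ when $k$ is even, for all $k \in [0,n]$. Consequently $q(k) - 1/2$ strictly changes sign between every pair of consecutive integers in $[0,n]$, producing at least $n$ distinct real roots. This forces $\deg(q) \ge n$ and hence $2T \ge n$, giving $T = \Omega(n)$.

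I do not expect real obstacles in executing this plan; the argument is a classical application of the polynomial method, and the only technical point to verify carefully is that symmetrization preserves the degree bound. The conceptual reason the unbounded-error case is no easier than the bounded-error one is that the sign-change count never uses the magnitude of $\ve$, only its sign: any strictly positive bias already forces $n$ roots of $q - 1/2$, reproducing the same $\Omega(n)$ lower bound as in Lemma~\ref{lem:parity}.
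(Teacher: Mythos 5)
Your argument is correct. Note, however, that the paper does not prove this lemma at all---it is stated with a citation to Farhi et al.~\cite{Farhi_1998}---so there is no in-paper proof to compare against; what you have written is a clean, self-contained derivation of the cited result. Your route is the standard polynomial-method one: the upper bound follows trivially from $\mathsf{UQ}\le\mathsf{Q}$ and Lemma~\ref{lem:parity}, and for the lower bound the degree-$2T$ bound on the acceptance probability, Minsky--Papert symmetrization, and the sign-alternation count are all sound, since every input of Hamming weight $k$ has the same parity, so averaging preserves the strict inequalities $q(k)>1/2$ (odd $k$) and $q(k)<1/2$ (even $k$), forcing $n$ sign changes and hence $\deg q\ge n$, i.e.\ $T\ge n/2$. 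This is equivalent to the characterization $\mathsf{UQ}(f)=\ceil{\mathrm{sdeg}(f)/2}$ combined with $\mathrm{sdeg}(\PARITY[n])=n$, which is precisely the machinery the paper invokes later in its proof of Lemma~\ref{lem:parity_or_unbounded}; your closing observation that the bound uses only the sign of $\ve$ and not its magnitude is exactly the right explanation for why the unbounded-error complexity of \PARITY matches the bounded-error one.
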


The second Boolean function of interest is the composition of the \PARITY function and a partial \OR function, which we denote as \PARITYOR. Provided access to $X=(X_1,\dots,X_n)=(x_{1,1},x_{1,2},\dots,x_{1,m},x_{2,1},\dots,x_{n,m})$, the \PARITYOR[n][m] problem asks to return the value $f_{\PAR\text{-}\OR}(X):=f_{\PAR}(f_{\OR}(X_1),\dots,f_{\OR}(X_n))$.
The partial \OR[m] function assumes that at most one bit of each input bit string $X_t=(x_{t,1},\dots,x_{t,m})$ is non-zero and returns the value $f_{\mathrm{OR}}(X_t)=x_{t,1}\lor\cdots\lor x_{t,m}$.
In our quantum computing setup, we consider the following access model
\begin{align}
\label{eq:access_party_or}
    \mathcal{O}^{(2)}_{X}\ket{k,i,t} = \ket{k\oplus x_{t,i},i,t}
    \quad\text{ for }
    k\in\{0,1\}, i\in[m] \text{ and } t\in[n].
\end{align}

Using Ref.~\cite[Theorem 1.5]{reichardtReflectionsQuantumQuery2010} to compose the bounded-error bounds of \PARITY, $\Theta(n)$~\cite{Farhi_1998}, and \OR, $\Theta(\sqrt{m})$~\cite{Grover_1996,Beals2001}, we obtain the following result.
Note that Ref.~\cite[Theorem 1.5]{reichardtReflectionsQuantumQuery2010} is also applicable to partial functions. See Refs.~\cite[Theorem 1.3]{reichardtReflectionsQuantumQuery2010} and \cite[Theorem 11]{Hoyer_2007}.

\begin{lemma}[Tight bound of bounded-error \PARITYOR \cite{Farhi_1998,Grover_1996,reichardtReflectionsQuantumQuery2010}]\label{lem:parity_or}
    The bounded-error quantum query complexity $\mathsf{Q}$ of \PARITYOR[n][m] is $\mathsf{Q}(\PARITYOR[n][m]) = \Theta(n\sqrt{m})$.
\end{lemma}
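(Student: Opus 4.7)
The plan is to derive the lemma directly from Reichardt's query-complexity composition theorem, applied to the tight bounds already cited for the constituent functions. The function of interest decomposes naturally as a \PARITY[n] applied to the outputs of $n$ disjoint copies of the promise-\OR[m], one per block $X_t$ of $m$ input bits. Reichardt's theorem~\cite[Thm.~1.5]{reichardtReflectionsQuantumQuery2010} then asserts
$$\mathsf{Q}\bigl(\PARITYOR[n][m]\bigr)=\Theta\bigl(\mathsf{Q}(\PARITY[n])\cdot \mathsf{Q}(\OR[m])\bigr)=\Theta(n\sqrt{m}),$$
where I substitute $\mathsf{Q}(\PARITY[n])=\Theta(n)$ from Lemma~\ref{lem:parity} and $\mathsf{Q}(\OR[m])=\Theta(\sqrt{m})$ from Grover's algorithm~\cite{Grover_1996} and the matching polynomial-method lower bound of~\cite{Beals2001}.

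Before invoking the composition theorem I would verify that the oracle model matches. The oracle $\mathcal{O}^{(2)}_{X}$ in Eq.~\eqref{eq:access_party_or} is simply the standard bit-query oracle for the flat string $X\in\{0,1\}^{nm}$, with the two index registers $(t,i)$ picking out which block $X_t$ and which position inside the block is queried. Thus a single call to $\mathcal{O}^{(2)}_{X}$ coincides with a single query to the composed function, so no oracle-simulation overhead enters the bound. The upper-bound direction of Reichardt's theorem is then realized by an optimal \PARITY algorithm whose leaves are span-program (or Grover-style) routines for \OR, while the matching lower bound follows from the multiplicativity of the (negative-weight) adversary bound under composition.

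The main obstacle---and the reason the paragraph preceding the lemma explicitly qualifies the applicability of~\cite[Thm.~1.5]{reichardtReflectionsQuantumQuery2010}---is that the inner function $f_{\OR}$ is \emph{partial}: it is promised that each block $X_t$ contains at most one $1$. Standard composition statements are cleanest for total functions, and with a partial inner function one must ensure that intermediate superpositions can be analyzed without escaping the promised domain. The plan is to handle this by appealing to the partial-function form of the composition theorem, as developed in~\cite[Thms.~1.3 and~1.5]{reichardtReflectionsQuantumQuery2010}, together with the general adversary lower bound for partial Boolean functions in~\cite[Thm.~11]{Hoyer_2007}. With these ingredients in place, the tight $\Theta(n\sqrt{m})$ bound for \PARITYOR[n][m] follows immediately from the multiplicative composition of the tight bounds for its outer and inner components.
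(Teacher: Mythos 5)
Your proposal is correct and follows essentially the same route as the paper: the lemma is obtained by composing the tight bounds $\mathsf{Q}(\PARITY[n])=\Theta(n)$ and $\mathsf{Q}(\OR[m])=\Theta(\sqrt{m})$ via Reichardt's composition theorem \cite[Theorem 1.5]{reichardtReflectionsQuantumQuery2010}, with the partiality of the inner \OR handled by appeal to the partial-function versions of the composition and adversary results \cite[Theorem 1.3]{reichardtReflectionsQuantumQuery2010}, \cite[Theorem 11]{Hoyer_2007}. Your additional check that $\mathcal{O}^{(2)}_{X}$ is just the standard bit-query oracle for the flat string, so no simulation overhead arises, is a sensible elaboration but does not change the argument.
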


It is known that the unbounded-error \OR[m] problem is solved with a single query~\cite[Section 5.3.2]{Montanaro_2011}, i.e. $\mathsf{UQ}(\OR[m])=1$. We show that this leads to a $\Theta(n)$ query complexity for the unbounded-error \PARITYOR[n][m] problem.
\begin{lemma}[Tight bound of unbounded-error \PARITYOR]
\label{lem:parity_or_unbounded}
    The unbounded-error quantum query complexity $\mathsf{UQ}$ of \PARITYOR[n][m] is $\mathsf{UQ}(\PARITYOR[n][m]) = \Theta(n)$.
\end{lemma}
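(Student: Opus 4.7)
My plan is to prove the matching bounds $\Omega(n)$ and $O(n)$ separately. The lower bound will follow by a query-preserving reduction from unbounded-error \PARITY[n], invoking Lemma~\ref{lem:parity_unbounded}; the upper bound will follow by running the assumed 1-query unbounded-error \OR[m] subroutine independently on each of the $n$ blocks and XOR-ing the outcomes. The fact that the bound does not scale with $m$ --- in contrast to the bounded-error $\Theta(n\sqrt{m})$ of Lemma~\ref{lem:parity_or} --- reflects the collapse of $\mathsf{UQ}(\OR[m])$ from $\Theta(\sqrt{m})$ (bounded) to $1$ (unbounded).

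For the lower bound, I would embed a \PARITY[n] instance $x=(x_1,\dots,x_n)$ into a \PARITYOR[n][m] instance $X$ by setting $X_t=(x_t,0,\dots,0)$ for every $t\in[n]$. This trivially obeys the partial-\OR promise and gives $f_{\OR}(X_t)=x_t$, hence $f_{\PAR\text{-}\OR}(X)=f_{\PAR}(x)$. A single call to $\mathcal{O}_X^{(2)}$ is simulated by a single call to $\mathcal{O}_x^{(1)}$ controlled on the $i$-register being in state $|1\rangle$, because $x_{t,i}=x_t\,\delta_{i,1}$ for this embedding. Hence any $q$-query unbounded-error algorithm for \PARITYOR[n][m] yields a $q$-query unbounded-error algorithm for \PARITY[n], and Lemma~\ref{lem:parity_unbounded} forces $q=\Omega(n)$.

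For the upper bound, by $\mathsf{UQ}(\OR[m])=1$ there exist a 1-query procedure producing a state $\rho_{X_t}$ on a fresh ancilla register and a fixed projector $\Pi$ with $\Tr[\rho_{X_t}\Pi]\geq 1/2+\ve$ if $f_{\OR}(X_t)=1$ and $\Tr[\rho_{X_t}\Pi]\leq 1/2-\ve$ if $f_{\OR}(X_t)=0$, for some $\ve>0$. The key observation is that restricting $\mathcal{O}_X^{(2)}$ to a fixed block index $t$ is exactly the \OR[m] oracle on $X_t$, so each block consumes only one query to $\mathcal{O}_X^{(2)}$. Running the OR subroutine on all $n$ blocks in parallel on disjoint ancillas, measuring each block with $\Pi$ to obtain a bit $y_t$ with error probability $q_t:=\Pr[y_t\neq f_{\OR}(X_t)]\leq 1/2-\ve$, and outputting $\bigoplus_t y_t$, the mutual independence of the $y_t$'s together with the elementary identity
\begin{align*}
\Pr\!\left[\bigoplus_{t=1}^n y_t = f_{\PAR\text{-}\OR}(X)\right] = \frac{1+\prod_{t=1}^n(1-2q_t)}{2} \geq \frac{1 + (2\ve)^n}{2} > \frac{1}{2}
\end{align*}
certifies the unbounded-error condition with exactly $n$ queries.

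The only real subtlety is that the per-block OR subroutines must be genuinely independent so that the XOR-parity identity applies; this is immediate once they are assigned disjoint ancillas, and no composition theorem such as the one underlying Lemma~\ref{lem:parity_or} is needed. I do not expect any substantial technical obstacle here: the asymmetry between unbounded- and bounded-error composition is precisely that an arbitrarily small product bias $(2\ve)^n/2$ still suffices, so the XOR construction reduces the problem to a trivial classical post-processing step.
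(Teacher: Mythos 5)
Your proof is correct, and your upper bound takes a genuinely different route from the paper. The lower bound --- padding each bit $x_t$ to $X_t=(x_t,0,\dots,0)$ and simulating $\mathcal{O}_X^{(2)}$ by a single controlled call to $\mathcal{O}_x^{(1)}$ --- is the same reduction used in the paper's alternative proof (Appendix~\ref{app:composition}). For the upper bound, however, the paper works entirely through the polynomial characterization $\mathsf{UQ}(f)=\ceil{\text{sdeg}(f)/2}$: the main proof invokes Sherstov's multiplicativity of threshold degree, $\text{sdeg}(\PARITY\circ\OR)=\text{sdeg}(\PARITY)\,\text{sdeg}(\OR)$, while the appendix exhibits the explicit degree-$n$ sign representation $h(X)=(-1)^{n+1}\prod_{t}\bigl(\sum_{i}x_{t,i}-\tfrac{1}{2}\bigr)$. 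You instead give a direct algorithmic composition: run the $1$-query unbounded-error \OR subroutine independently on each block and XOR the measured bits, with the piling-up identity guaranteeing a residual bias of at least $(2\ve)^n/2>0$. This is more elementary and self-contained --- it needs neither the degree characterization nor any composition theorem --- at the cost of using $n$ rather than $\ceil{n/2}$ queries, which is irrelevant for the $\Theta(n)$ statement. Your argument is sound provided one notes (as you do) that independence of the per-block outcomes is ensured by disjoint ancillas, and that the intermediate measurements can be deferred so the procedure fits the single-projector format of Problem~\ref{def:boolean_problem}; the exponentially small bias $(2\ve)^n/2$ is acceptable precisely because the unbounded-error model allows any positive $\ve$, which is the asymmetry with the bounded-error setting of Lemma~\ref{lem:parity_or} that you correctly identify.
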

\begin{proof}
The unbounded-error quantum query complexity of a (partial) Boolean function $f$ is given as $\mathsf{UQ}(f) = \ceil{\text{udeg}(f)/2}$, where $\text{udeg}(f)$ is the lowest degree among all unbounded error polynomials for $f$ \cite[Theorem 3.4]{Montanaro_2011}.
The same degree can also be expressed as the threshold degree $\text{sdeg}(f)$, which is defined as the minimum degree of polynomials that sign-represent $f$, so we have $\text{udeg}(f)=\text{sdeg}(f)$ \cite[discussion after Definition 2.1]{Montanaro_2011}.
The threshold degree of the composed function \PARITYOR satisfies $\text{sdeg}(\PARITY\circ \OR) = \text{sdeg}(\PARITY)\, \text{sdeg}(\OR)$ \cite[Theorem 1.1]{Sherstov_2009}.
Combining these, we have $\mathsf{UQ}(\PARITYOR[n][m]) = \ceil{\text{sdeg}(\PARITYOR[n][m])/2} = \ceil{\text{sdeg}(\PARITY[n])\, \text{sdeg}(\OR[m])/2} = \Theta(\mathsf{UQ}(\PARITY[n])\mathsf{UQ}(\OR[m]))=\Theta(n)$.
\end{proof}

We give an alternative proof in Appendix~\ref{app:composition}, which provides a more intuitive view of the tight complexity bound.

\section{Lower bound in condition number and target error}
\label{sec:precision}

Here we give a detailed derivation of the query-complexity lower bound of QLS in $\kappa$ and $\epsilon$ via a reduction from the unbounded-error \PARITY problem. 

\begin{theorem}[Lower bound of QLS with target error \cite{Costa_2023}]
\label{lem:qls_unbounded}
    Let the {\rm QLS} problem be defined as Problem~\ref{prob:QLS} with the sparsity $s$ fixed to a constant and the target error $0 < \epsilon<1/6$. The query complexity of {\rm QLS} is $\Omega(\kappa\log (1/\epsilon))$.
\end{theorem}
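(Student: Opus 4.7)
The plan is to reduce unbounded-error \PARITY[n] to QLS with parameters $(\kappa,\epsilon)$ and then apply Lemma~\ref{lem:parity_unbounded}. Given $X=(x_1,\dots,x_n)\in\{0,1\}^n$ with oracle $\mathcal{O}_X^{(1)}$, I will build a $2$-sparse, $(n+1)\times(n+1)$ bidiagonal matrix
\begin{equation*}
A_X = I - \lambda \sum_{t=1}^{n}(-1)^{x_t}\ketbra{t+1}{t},
\end{equation*}
with a parameter $\lambda\in(0,1)$ to be tuned, together with $\ket{b}=\ket{1}$. Each call to the sparse-matrix oracle $\mathcal{P}_{A_X}$ can be simulated with $\bigo(1)$ calls to $\mathcal{O}_X^{(1)}$ because the sparsity pattern of $A_X$ is fixed and each non-zero entry is determined by a single bit of $X$.

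Because the nilpotent part $V=\sum_{t}(-1)^{x_t}\ketbra{t+1}{t}$ satisfies $V^{n+1}=0$, the Neumann expansion gives
\begin{equation*}
A_X^{-1}\ket{b}=\sum_{k=0}^{n}\lambda^{k}(-1)^{x_1\oplus\cdots\oplus x_k}\ket{k+1},
\end{equation*}
so the sign of the amplitude at site $n+1$ equals $(-1)^{f_{\PAR}(X)}$. From $\opnorm{A_X}\le 1+\lambda$ and $\opnorm{A_X^{-1}}\le 1/(1-\lambda)$, I obtain $\kappa\le 2/(1-\lambda)$, so setting $\lambda=1-\Theta(1/\kappa)$ is consistent with the stated condition-number budget. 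To extract the parity, I use the projector $\Pi=\ketbra{+}{+}$ with $\ket{+}=(\ket{1}+\ket{n+1})/\sqrt{2}$; the ideal acceptance probability is then $(1+(-1)^{f_{\PAR}(X)}\lambda^n)^2/(2Z^2)$, where $Z^2=\lnorm{A_X^{-1}\ket{b}}^2$, and the two parity classes are separated by a gap of $2\lambda^n/Z^2$.

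The final step is the error/parameter balance. Since $\lnorm{\ket{\tilde x}-\ket{x}}\le \epsilon$ perturbs the acceptance probability by at most $2\epsilon$, the unbounded-error criterion of Problem~\ref{def:boolean_problem} is met as soon as $\lambda^n/Z^2$ beats $\epsilon$ by a constant factor. Choosing $n=\Theta(\kappa\log(1/\epsilon))$ with a sufficiently small hidden constant yields $\lambda^n=\epsilon^{\Theta(1)}$ while $Z^2=\bigo(\kappa)$, so the gap dominates the perturbation. The resulting reduction solves \PARITY[n] in the unbounded-error sense with $\bigo(1)$ queries to $\mathcal{O}_X^{(1)}$ per QLS oracle call, and Lemma~\ref{lem:parity_unbounded} forces the QLS algorithm to use $\Omega(n)=\Omega(\kappa\log(1/\epsilon))$ queries to $\mathcal{P}_{A_X}$.

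The principal obstacle is precisely this last balance. The normalization $Z^2$ grows like $\kappa$ for $\lambda$ close to $1$, so the parity signal is damped by roughly $\sqrt{\kappa}$ relative to the raw amplitude $\lambda^n$. Keeping the length of the encoded \PARITY instance at $\Theta(\kappa\log(1/\epsilon))$, as opposed to the weaker $\Theta(\kappa\log(1/(\kappa\epsilon)))$ that a crude estimate would yield, requires a careful choice of $\lambda$ and, if needed, adjusting the measurement or the initial state so that the effect of the normalization is absorbed into constants. Once this accounting is under control, the remaining steps (Neumann expansion, spectral bounds, and reduction bookkeeping) are routine.
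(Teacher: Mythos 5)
Your reduction framework (encode \PARITY in a lower-triangular perturbation of the identity, invert by a Neumann series, read the parity off a signed amplitude) is the right general idea, and your bookkeeping for the oracle simulation and for $\opnorm{A_X}\,\opnorm{A_X^{-1}}\le 2/(1-\lambda)$ is fine. But the "principal obstacle" you flag at the end is not a technicality that careful constant-chasing can absorb: it is fatal to the bidiagonal construction in exactly the regime the theorem is about. In your instance the entire parity signal sits in the single amplitude $\lambda^{n}/Z$ at site $n+1$, with $Z^2=\sum_{k=0}^{n}\lambda^{2k}=\Theta\bigl(1/(1-\lambda)\bigr)=\Theta(\kappa)$ once $n\gtrsim\kappa$. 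Hence the two \emph{ideal} normalized solutions for even and odd parity are at $\ell^2$-distance $2\lambda^{n}/Z=O(\epsilon^{c}/\sqrt{\kappa})$ for some constant $c\le 1$ forced by $n=\Theta(\kappa\log(1/\epsilon))$. As soon as $\kappa\gtrsim\epsilon^{2(c-1)}$ this distance falls below $2\epsilon$, so a single state is a valid $\epsilon$-approximate QLS answer for both parities simultaneously and the solver is not obliged to reveal the parity at all. No choice of $\lambda$, initial state, or measurement can rescue this, because the obstruction lives in the solution states themselves, before any measurement is applied. For fixed $\epsilon$ (say $\epsilon=1/10$) and growing $\kappa$ your reduction therefore proves nothing, which is precisely the interesting regime of the $\Omega(\kappa\log(1/\epsilon))$ bound.

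The paper's construction is designed to evade exactly this damping. Two structural choices matter. First, the perturbation $B$ is \emph{unitary} (a cyclic clock shift mod $3n$ dressed with the bit-flip unitaries $U_t$), and the coupling is $\delta^{1/n}$ rather than a fixed $\lambda$, so the geometric decay accumulated across the whole chain is only the constant $\delta$, not $\epsilon^{\Theta(1)}/\sqrt{\kappa}$; the condition number still comes out as $\kappa\le 2/(1-\delta^{1/n})=\Theta(n/\log(1/\epsilon))$ with $\delta=\Theta(\epsilon)$. Second, the clock has a middle plateau of $n$ states, every one of which carries the \emph{full} parity $\oplus_{j=1}^{n}x_j$, so the probability of landing in the parity-revealing subspace is $p=\delta^{2}/(1+\delta^{2}+\delta^{4})$ --- independent of $n$ and hence of $\kappa$. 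The resulting bias $p/2-3\sqrt{p}\,\epsilon+\epsilon^2/2$ stays positive for $\delta=6\epsilon$ and $\epsilon<1/6$, which is all the unbounded-error \PARITY reduction needs. To repair your proof you would need to replace the nilpotent shift by such a norm-preserving (or at least non-contracting) structure with a macroscopic parity-carrying subspace; as written, the argument has a genuine gap.
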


\begin{proof}
Our goal is to lower-bound the number of queries to the sparse access model in Def.~\ref{def:A_oracle}.
Based on Refs.~\cite{Harrow_2009, Costa_2023}, we define matrices
\begin{align}
    \label{eq:a_def_parity}
    &A := I - \delta^{1/n}B ,
    \\
    \label{eq:b_def_parity}
    &B :=
    \sum_{t=1}^{n} \Big(
        U_t\otimes \ketbra{t+1}{t}_{\rm c}
        + I\otimes \ketbra{t+n+1}{t+n}_{\rm c}
        + U_{n-t+1}\otimes \ketbra{t+2n+1}{t+2n}_{\rm c}
    \Big),\\
    &U_t :=(I \otimes \bra{t}_{\rm c})\mathcal{O}_X^{(1)}(I\otimes\ket{t}_{\rm c}),\label{eq:u_def_parity}
\end{align}
with $0<\delta<1$ so that the matrix $A$ is invertible. The parameter $\delta$ will be related to the target error $\epsilon$ later, such that the corresponding QLS problem solves the unbounded-error \PARITY problem.
With $\mathcal{O}^{(1)}_X$ given by Eq.~\eqref{eq:access_parity} it can be verified that $U_t=\sum_{k\in\{0,1\}}\ketbra{k\oplus x_t}{k}$. 
The clock register indicated by the subscript `c' in the unitary operator $B$ is defined modulo $3n$.
An exact QLS solver with the initial state $\ket{0}\ket{1}_{\rm c}$ creates the state
\begin{align}
\label{eq:inverse_parity}
\begin{split}
    \ket{\Psi}
    &:=\frac{A^{-1}\ket{0}\ket{1}_{\rm c}}{\lnorm*{A^{-1}\ket{0}\ket{1}_{\rm c}}}
    =
    \frac{\sum_{t=0}^{\infty}\delta^{t/n}B^t\ket{0}\ket{1}_{\rm c}}{\lnorm*{A^{-1}\ket{0}\ket{1}_{\rm c}}}
    \\
    &=
    c\sum_{t=0}^{n-1}\delta^{t/n}\Big(
        \ket{\oplus_{j=1}^tx_j}\ket{t+1}_{\rm c}
        \\
        &\qquad
        + \delta \ket{\oplus_{j=1}^nx_j}\ket{t+n+1}_{\rm c}
        \\
        &\qquad
        + \delta^2 \ket{\oplus_{j=1}^{n-t}x_j}\ket{t+2n+1}_{\rm c}
    \Big),
\end{split}
\end{align}
with $c=\left[\left(1-\delta^3\right)\lnorm*{A^{-1}\ket{0}\ket{1}_{\rm c}}\right]^{-1}$.
The QLS solver's target error $\epsilon$ will be discussed later.

Provided the state in Eq.~\eqref{eq:inverse_parity}, we consider the following measurement and post-processing.
First, we measure the clock register in the computational basis. If the measurement outcome indicates $t\in[n+1,2n]$, we accept the state of the first register since it corresponds to the correct parity $\ket{\oplus_{j=1}^nx_j}$. With an exact QLS solver the probability of obtaining this outcome is
\begin{align}
    p
    =
    \frac{\delta^{2}}{1+\delta^{2}+\delta^{4}}.
\end{align}
On the other hand, if we obtain any other measurement outcome on the clock register, we randomly output 0 or 1 with probability $1/2$ each.
This protocol amounts to creating the single-qubit density matrix
\begin{align}
    \rho 
    := 
    \Tr_{\rm c}[(I\otimes{\Pi}_{\rm c})\ketbra{\Psi}{\Psi}]
    +
    \Tr[(I\otimes (I -{\Pi}_{\rm c}))\ketbra{\Psi}{\Psi}]\frac{I}{2},
\end{align}
where we defined the projector ${\Pi}_{\rm c}:=\sum_{t=n+1}^{2n}\ketbra{t}{t}_{\rm c}$ and $\Tr_{\rm c}$ denotes the partial trace over the clock register.
This state $\rho$ corresponds to the one in Problem~\ref{def:boolean_problem}.
Finally, we measure the projector $\Pi:=\ketbra{1}{1}$ on $\rho$ to determine the parity.
When $\oplus_{j=1}^n x_j=1$, we have
\begin{align}
    \Tr[\rho \Pi]
    &=
    p \, \Tr\left[{\Pi}\ketbra{\oplus_{j=1}^n x_j}{\oplus_{j=1}^n x_j}\right]+
    \frac{1 - p}{2} 
    =
    \frac{1}{2} + \frac{p}{2}.
\end{align}
Similarly, when $\oplus_{j=1}^n x_j=0$, we have
\begin{align}
    \Tr[\rho \Pi]
    =
    \frac{1}{2} - \frac{p}{2}.
\end{align}
Thus, there exists $\ve = \frac{p}{2} = \frac{\delta^{2}}{2(1+\delta^{2}+\delta^{4})}$ such that Eqs.~\eqref{eq:unbounded_parity_even} and~\eqref{eq:unbounded_parity_odd} hold.

To assess the influence of the target error $\epsilon$, we let $\ket{\Psi'}$ be a quantum state such that $\lnorm{\ket{\Psi}-\ket{\Psi'}}\le\epsilon$.
The erroneous density matrix is expressed as
\begin{align}
    \rho'
    := 
    \Tr_{\rm c}[(I\otimes{\Pi}_{\rm c})\ketbra{\Psi'}{\Psi'}]
    +
    \Tr[(I\otimes(I -{\Pi}_{\rm c}))\ketbra{\Psi'}{\Psi'}] \frac{I}{2},
\end{align}
and the probability of measuring outcome 1 (projection by $\Pi$) is
\begin{align}
    \Tr[\rho'{\Pi}]
    &=\lnorm*{(\Pi\otimes\Pi_{\rm c})\ket{\Psi'}}^2
    +
    \frac{1-\lnorm*{(I\otimes \Pi_{\rm c})\ket{\Psi'}}^2}{2}.
\end{align}
When $\oplus_{j=1}^n x_j=1$, using the inequalities 
\begin{align}
    \lnorm*{(\Pi\otimes\Pi_{\rm c})\ket{\Psi'}}
    &\ge{\lnorm*{(\Pi\otimes\Pi_{\rm c})\ket{\Psi}}
    -
    \lnorm*{(\Pi\otimes\Pi_{\rm c})(\ket{\Psi}-\ket{\Psi'})}}
    \ge {\sqrt{p}-\epsilon}
\end{align}
and
\begin{align}
    \lnorm*{(I\otimes\Pi_{\rm c})\ket{\Psi'}}
    &\le{\lnorm*{(I\otimes\Pi_{\rm c})\ket{\Psi}} 
    + \lnorm*{(I\otimes(I- \Pi_{\rm c}))(\ket{\Psi'}-\ket{\Psi})}}
    \le \sqrt{p}+\epsilon,
\end{align}
one can lower-bound the probability as
\begin{align}
    \Tr[\rho'{\Pi}]
    &\ge\frac{1}{2}+\frac{p}{2}-3\sqrt{p}\epsilon+\frac{\epsilon^2}{2}.
\end{align}
Similarly, when $\oplus_{j=1}^n x_j=0$, we have
\begin{align}
    \Tr[\rho'{\Pi}]
    \le \frac{1}{2}-\frac{p}{2}-\sqrt{p}\epsilon+\frac{\epsilon^2}{2}.
\end{align}
Note that $\delta$ needs to be chosen such that $\frac{p}{2}-3\sqrt{p}\epsilon+\frac{\epsilon^2}{2}$ and $\frac{p}{2}+\sqrt{p}\epsilon-\frac{\epsilon^2}{2}$ are both positive under the condition $0<\epsilon<1/6$ so that Eqs.~\eqref{eq:unbounded_parity_even} and~\eqref{eq:unbounded_parity_odd} for the unbounded-error Problem~\ref{def:boolean_problem} can be satisfied. 
For instance, if we take $\delta=6\epsilon$, it can be verified that there exists $\varepsilon>0$ such that Eqs.~\eqref{eq:unbounded_parity_even} and~\eqref{eq:unbounded_parity_odd} hold.

We have reduced the unbounded-error \PARITY[n] to the QLS problem and, from Lemma~\ref{lem:parity_unbounded}, the former has query-complexity lower bound of $\Omega(n)$ in terms of access to $\mathcal{O}^{(1)}_X$.
From Eqs.~\eqref{eq:a_def_parity}, \eqref{eq:b_def_parity} and \eqref{eq:u_def_parity} one access to the sparse matrix oracle $\mathcal{P}_A$ requires a single access to $\mathcal{O}^{(1)}_X$, so the query complexity of the QLS solver is also lower bounded by $\Omega(n)$.
The condition number $\kappa$ of $A$ satisfies $\kappa\le 2/(1-\delta^{1/n})=\Theta(n/\log(1/\epsilon))$ because the unitary $B$ has eigenvalues on the unit circle.
Therefore, we have the lower bound of the query complexity $\Omega(\kappa\log (1/\epsilon))$ for any QLS solver in the sparse access model.
\end{proof}

\section{Lower bound in condition number and sparsity}
\label{sec:sparsity}

Here we prove the query-complexity lower bound of QLS incorporating the sparsity $s$ via a reduction from the bounded-error \PARITYOR problem. Our proof uses the technique applied in Ref.~\cite{Low_2017} for proving the query-complexity lower bound of Hamiltonian simulation.

\begin{theorem}[Lower bound of QLS with sparsity]
\label{lem:qls_bounded}
    Let the {\rm QLS} be defined as Problem~\ref{prob:QLS} with the target error $\epsilon$ fixed to a constant. The query complexity of {\rm QLS} is $\Omega(\kappa \sqrt{s})$.
\end{theorem}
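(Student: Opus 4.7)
The plan is to reduce bounded-error $\PARITYOR[n][m]$ to QLS with $\kappa = \Theta(n)$ and $s = \Theta(m)$, so that Lemma~\ref{lem:parity_or} gives $\Omega(n\sqrt{m}) = \Omega(\kappa\sqrt{s})$. Following the block-walk technique of Ref.~\cite{Low_2017}, I would generalize Eqs.~\eqref{eq:a_def_parity}--\eqref{eq:u_def_parity} by replacing the $2$-dimensional scalar step unitary $U_t$, which encoded a single bit $x_t$, with an $(m+1)$-dimensional step unitary $V_t$ that encodes an entire $m$-bit block $X_t$ of a \PARITYOR input. Crucially, the $\sqrt{m}$ overhead needed for Grover-style OR extraction should be absorbed into the sparsity $s$ rather than inflating the clock register and hence $\kappa$.

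Concretely, I would set $A = I - \delta^{1/n} B$ with
\begin{align*}
B = \sum_{t=1}^{n}\Big[V_t \otimes \ketbra{t+1}{t}_{\rm c} + I \otimes \ketbra{t+n+1}{t+n}_{\rm c} + V_{n-t+1}^\dagger \otimes \ketbra{t+2n+1}{t+2n}_{\rm c}\Big].
\end{align*}
The unitary $V_t$ acts on a parity qubit tensored with a position register of dimension $m+1$, with matrix elements that are either structural constants or simple functions of individual bits $x_{t,i}$. A natural choice is a Grover-iteration-style step whose off-diagonal entries are $\pm x_{t,i}/\sqrt{m}$ and whose diagonal is padded to enforce unitarity (the \PARITYOR promise that each $X_t$ has at most one non-zero bit keeps the padding essentially input-independent). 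With this choice $V_t$ is $O(m)$-sparse, so $A$ is $s$-sparse with $s=O(m)$; the non-zero positions of $V_t$ depend only on the fixed block structure, and each non-zero value depends on at most one bit $x_{t,i}$, so any call to $\mathcal{P}_A$ can be simulated with $O(1)$ calls to $\mathcal{O}^{(2)}_X$. Because $B$ is unitary, $\kappa \le 2/(1-\delta^{1/n}) = \Theta(n)$.

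Feeding the QLS solver the initial state $\ket{b}=\ket{0}\ket{0}_{\rm p}\ket{1}_{\rm c}$ with constant target error and performing the clock-measurement and post-processing scheme of Theorem~\ref{lem:qls_unbounded}, the parity register conditioned on clock outcomes $t\in[n+1,2n]$ should carry a bias of $\Omega(1)$ on $\ket{\oplus_{t=1}^{n}f_{\OR}(X_t)}$; measuring it yields the \PARITYOR answer. Any QLS algorithm using $T$ calls to $\mathcal{P}_A$ therefore induces a bounded-error \PARITYOR solver using $O(T)$ calls to $\mathcal{O}^{(2)}_X$, and Lemma~\ref{lem:parity_or} implies $T = \Omega(n\sqrt{m}) = \Omega(\kappa\sqrt{s})$. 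The fixed-constant $\epsilon$ is handled by a one-time adjustment of $\delta$ mirroring the $\epsilon$-perturbation argument at the end of Theorem~\ref{lem:qls_unbounded}.

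The main obstacle is making the step $V_t$ work on \emph{both} the oracle side (sparsity $O(m)$, input-independent non-zero positions, $O(1)$-bit-query values) and the dynamics side (one forward walk through the clock register yielding a constant-bias signal on $\ket{\oplus_t f_{\OR}(X_t)}$). Unlike in Theorem~\ref{lem:qls_unbounded}, a single $V_t$ does not complete an OR computation on its own: it acts essentially as one step of a Grover amplification, and it is only through the combination of the $m$-sparse block structure of $V_t$, which contributes the $\sqrt{m}$ via the standard sparse-to-block-encoding conversion that the QLS algorithm must pay, together with the $\Theta(n)$-long clock walk that the QLS output is forced to encode \PARITYOR. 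Verifying this quantitatively---in particular proving the $\Omega(1)$ bias on the parity register after a single forward walk, and checking that the diagonal padding keeps $V_t$ exactly unitary under the \PARITYOR promise without leaking the OR value into the position of non-zero entries---is the main technical content of the argument.
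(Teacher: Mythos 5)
Your high-level architecture matches the paper's: reduce bounded-error \PARITYOR[n][m] to QLS via a $3n$-step clock walk, arrange $\kappa=\Theta(n)$ and $s=\Theta(m)$, show one call to $\mathcal{P}_A$ costs $O(1)$ calls to $\mathcal{O}^{(2)}_X$, and invoke Lemma~\ref{lem:parity_or}. However, the step you yourself flag as ``the main technical content''---getting a constant-bias parity signal from a single application of the block operator---is not just unverified in your write-up; the specific mechanism you propose would fail. A \emph{unitary} Grover-style step $V_t$ whose input-dependent entries have magnitude $O(1/\sqrt{m})$ can transfer only $O(1/\sqrt{m})$ amplitude toward the marked direction per application, so after one forward pass through the clock the bias on the parity register is $O(1/\sqrt{m})$, not $\Omega(1)$. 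Recovering a constant bias would require $\Theta(\sqrt{m})$ repetitions per block, inflating the clock length and hence $\kappa$ to $\Theta(n\sqrt{m})$ and collapsing the bound back to $\Omega(\kappa)$. Your secondary worry is also real: making $V_t$ exactly unitary generally forces the diagonal padding to depend on whether the block contains a $1$, i.e.\ on $f_{\OR}(X_t)$ itself, which breaks the requirement that each matrix entry be an affine function of a single queried bit.

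The paper's resolution is to \emph{abandon unitarity of the step operator}. It takes $H_t=\bigl(\begin{smallmatrix}I-C_t & C_t\\ C_t & I-C_t\end{smallmatrix}\bigr)$ with $C_t$ the circulant matrix of the raw bits $x_{t,j}$ (entries $0$ or $1$, not $1/\sqrt{m}$). The uniform superposition $\ket{u_m}$ is an exact eigenvector of $C_t$ with eigenvalue $\sum_j x_{t,j}=f_{\OR}(X_t)$ under the promise, so $H_t\ket{k}\ket{u_m}=\ket{k\oplus f_{\OR}(X_t)}\ket{u_m}$: the OR is extracted \emph{exactly} in one step, restricted to the $\ket{u_m}$ subspace, and the initial state is $\ket{0}\ket{u_m}\ket{1}_{\rm c}$ rather than $\ket{0}\ket{0}_{\rm p}\ket{1}_{\rm c}$. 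Non-unitarity is handled by the bound $\opnorm{H_t}\le 3$ and the rescaling $A=I-\tfrac{1}{3}\e^{-1/n}B$, which keeps $A$ invertible with $\kappa=\Theta(n)$; the $\sqrt{m}$ in the lower bound then comes purely from the fact that $C_t$ is an $m$-sparse matrix (with input-independent nonzero positions) that the QLS algorithm must query, and each entry is the affine function $x_{t,j-i+1}$ or $\delta_{i,j}-x_{t,j-i+1}$ of one bit. Without replacing your unitary $V_t$ by something of this kind, the reduction does not go through.
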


\begin{proof}
Our goal is to set up a QLS that solves the bounded \PARITYOR[n][m] problem, again given the sparse access model in Def.~\ref{def:A_oracle}. We begin by defining matrices
\begin{align}
    &A := I-\frac{1}{3\e^{1/n}}B, 
    \\
    &B := \sum_{t=1}^{n} \Big(
        H_t\otimes \ketbra{t+1}{t}_{\rm c}
        + I\otimes \ketbra{t+n+1}{t+n}_{\rm c}+ H_{n-t+1}\otimes \ketbra{t+2n+1}{t+2n}_{\rm c}
    \Big),\label{eq:B_def}
    \\
    &H_t:=\begin{pmatrix}
        I - C_t&C_t\\C_t&I - C_t
    \end{pmatrix} ,
    \quad C_t:=\begin{pmatrix}
        x_{t,1}&x_{t,2}&\dots&x_{t,m}\\
        x_{t,m}&x_{t,1}&\dots&x_{t,{m-1}}\\
        \vdots&\vdots&\ddots&\vdots\\
        x_{t,2}&x_{t,3}&\dots&x_{t,1}
    \end{pmatrix} .
\end{align}
The construction of $C_t$ requires queries to the oracle $\mathcal{O}^{(2)}_X$ in Eq.~\eqref{eq:access_party_or}.
As discussed in Section~\ref{sec:preliminaries}, we assume that at most one bit of each bit string $X_t=(x_{t,1},\dots,x_{t,m})$ is non-zero, so $\sum_{j=1}^mx_{t,j}$ is either 0 or 1 and indeed
\begin{align}
    f_{\OR}(X_t)=\sum_{j=1}^mx_{t,j}.
\end{align}
We treat the $m \times m$ matrix $C_t$ as an $m$-sparse matrix because the binary variables $x_{t,j}$ are unknown when accessing them. Similarly, matrix $I - C_t$ is treated as an $m$-sparse matrix.
In other words, we let $\mathcal{P}_A^{\rm pos}$ return column indices associated with these matrices even if the corresponding entries are zero.
It follows from these definitions that the matrix $A$ has sparsity $2m +1$.

The uniform superposition $\ket{u_m} = \frac{1}{\sqrt{m}}\sum_{j=1}^m \ket{j}$ is an eigenstate of $C_t$ that obeys
\begin{align}
    C_t\ket{u_m}=f_{\OR}(X_t)\ket{u_m},
\end{align}
and the application of $H_t$ to the state $\ket{k}\ket{u_m}$ with $k\in\{0,1\}$ gives
\begin{align}
    H_t\ket{k}\ket{u_m}=\ket{k\oplus f_{\OR}(X_t)}\ket{u_m}.
\end{align}
In the subspace where the second register is fixed in the $\ket{u_m}$ state, $H_t$ acts as unitary and self-inverse, i.e., $H^2_t\ket{k}\ket{u_m}=\ket{k}\ket{u_m}$.
Using this property, we see that the application of $B$ to the state $\ket{k}\ket{u_m}\ket{t}_{\rm c}$ results in
\begin{align}
    B\ket{k}\ket{u_m}\ket{t}_{\rm c}
    =\begin{cases}
        \ket{k\oplus f_{\OR}(X_t)}\ket{u_m}\ket{t+1}_{\rm c}&\text{if }t\in[1,n],
        \\
        \ket{k}\ket{u_m}\ket{t+1}_{\rm c}&\text{if } t\in[n+1,2n],
        \\
        \ket{k\oplus f_{\OR}(X_{3n-t+1})}\ket{u_m}\ket{t+1}_{\rm c}&\text{if }t\in[2n+1,3n].
    \end{cases}
\end{align}

Next, a QLS solver with the initial state $\ket{0}\ket{u_m}\ket{1}_{\rm c}$ approximates the state
\begin{align}
\label{eq:inverse_parity_or}
\begin{split}
    \ket{\Psi}
    &:=\frac{A^{-1}\ket{0}\ket{u_m}\ket{1}_{\rm c}}{\lnorm*{A^{-1}\ket{0}\ket{u_m}\ket{1}_{\rm c}}}
    =
    \frac{\sum_{t=0}^{\infty}\frac{1}{3^t}\e^{-t/n}B^t\ket{0}\ket{u_m}\ket{1}_{\rm c}}{\lnorm*{A^{-1}\ket{0}\ket{u_m}\ket{1}_{\rm c}}}
    \\
    &=
    c\sum_{t=0}^{n-1}\frac{1}{3^t}\e^{-t/n}\Big(
        \ket{\oplus_{j=1}^t f_{\OR}(X_j)}\ket{u_m}\ket{t+1}_{\rm c}
        \\
        &\qquad
        + \e^{-1} \ket{\oplus_{j=1}^n f_{\OR}(X_j)}\ket{u_m}\ket{t+n+1}_{\rm c}
        \\
        &\qquad
        + \e^{-2} \ket{\oplus_{j=1}^{n-t} f_{\OR}(X_j)}\ket{u_m}\ket{t+2n+1}_{\rm c}
    \Big),
\end{split}
\end{align}
with $c=\left[\left(1-27^{-n}\e^{-3}\right)\lnorm*{A^{-1}\ket{0}\ket{u_m}\ket{1}_{\rm c}}\right]^{-1}$.
If we measure the clock register and obtain $t\in[n+1,2n]$, then the first register is in the state with the correct parity $\ket{\oplus_{t=1}^nf_{\OR}(X_t)}$. The probability of obtaining this outcome is
\begin{align}
    p
    =
    \frac{\e^{-2}}{1+\e^{-2}+\e^{-4}} \approx 0.117.
\end{align}
This means that a QLS solver with constant target error can solve the \PARITYOR problem with constant success probability.

We have reduced the bounded-error \PARITYOR[n][m] problem to the QLS problem. From Lemma~\ref{lem:parity_or}, the former has query complexity of $\Theta(n\sqrt{m})$ in terms of access to $\mathcal{O}^{(2)}_X$.
Because the non-zero entries of the matrix $A$ that depend on the bit string $X=(X_1,\dots,X_n)$ can be expressed as an affine function of $x_{t,k}$, its sparse matrix access $\mathcal{P}_A$ can be implemented with at most one access to $\mathcal{O}^{(2)}_X$.
We give an explicit construction of $\mathcal{P}_A$ in Appendix~\ref{app:sparse_oracle}.
As a result, the query complexity of QLS is also lower-bounded by $\Omega(n\sqrt{m})$ in terms of access to $\mathcal{P}_A$.

Now we relate the parameters $n,m$ from the \PARITYOR problem to the condition number $\kappa$ and the sparsity $s$ of our matrix $A$.
The condition number of $A$ satisfies $\kappa \le \frac{1+\frac{1}{3}\e^{-1/n}\opnorm{B}}{1-\frac{1}{3}\e^{-1/n}\opnorm{B}}$, so we want to bound $\opnorm{B}$.
From Eq. \eqref{eq:B_def}, we have $\opnorm*{B^\dag B}=\opnorm[\big]{\sum_{t=1}^{n} (H_t^\dag H_t\otimes \ketbra{t}{t}_{\rm c} + I\otimes \ketbra{t+n}{t+n}_{\rm c}+ H_{n-t+1}^\dag H_{n-t+1}\otimes \ketbra{t+2n}{t+2n}_{\rm c})}=\max\{(\max_t \opnorm{H_t})^2,1\}$, so $\opnorm{B}=\sqrt{\opnorm{B^\dag B}}=\max\{\max_t \opnorm{H_t},1\}$.
Because $\opnorm{H_t}$ can be bounded by a constant as 
\begin{align}
    \opnorm{H_t}=\opnorm*{\begin{pmatrix}
        -1&1\\1&-1
    \end{pmatrix}\otimes C_t+I\otimes I} \le \opnorm*{\begin{pmatrix}
        -1&1\\1&-1
    \end{pmatrix}\otimes C_t} + \opnorm{I\otimes I} \le 3,
\end{align}
the condition number is given as $\kappa \le 2/(1-\e^{-1/n}) = \Theta(n)$.
The sparsity of the matrix $A$ is $2m+1$, i.e., $s = \Theta(m)$.
Substituting both in $\Omega(n\sqrt{m})$ we obtain the lower bound $\Omega(\kappa\sqrt{s})$ for any QLS solver in the sparse-matrix oracle access model.
\end{proof}

\section{Discussion}
\label{sec:discussion}
We employed the reduction of the unbounded-error \PARITY problem to rederive the $\kappa$ and $\epsilon$ dependence of the query-complexity lower bound for QLS solvers.
We then turned to the sparsity $s$ dependence by invoking the known complexity bound of the bounded-error \PARITYOR problem.
One may wonder why we did not resort to the unbounded-error \PARITYOR problem to accommodate all three parameters $\epsilon$, $\kappa$, and $s$ in the lower bound, which would appear to achieve our ultimate goal.
The unbounded \OR[m] problem can be solved with $\Theta(1)$ queries~\cite{Montanaro_2011}. 
Unfortunately, this leads to $\Theta(n)$ query complexity for the unbounded \PARITYOR[n][m] problem as shown in Lemma~\ref{lem:parity_or_unbounded}. 
Recalling that the parameter $m$ is related to sparsity as $s = \Theta(m)$, a reduction of the unbounded problem to QLS would incorporate $\epsilon$ at the price of losing the sparsity dependence with the presented proof strategy.
To accommodate all three parameters in the lower bound one potential avenue is to reduce QLS from a different problem, namely a composition of Boolean functions such that the unbounded-error query complexity is $\Omega(n\sqrt{m})$. To this end, tools developed in Refs.~\cite{Sherstov_2009} and~\cite{Montanaro_2011} could be useful.
If reductions from Boolean problems fail to yield the desired result, we must attack the problem from a completely different angle.
For example, in Ref.~\cite{Montanaro_2024} it was proved that the complexity lower bound for approximating entries of $g(A)$ to target error $\epsilon$ is given by the minimum degree of a polynomial that $\epsilon$-approximates function $g$ on the spectrum of $A$. Therefore, finding the minimum degree for $g(x)=1/x$ would give a lower bound for a problem closely related to QLS.
However, the $\epsilon$-dependence of the minimum degree is not known, leaving the $\epsilon$ dependence of the bound as an open question.
Moreover, we could not incorporate the $s$ dependence with this approach, as it does not appear in the degree.
We leave the joint lower bound in all three parameters for future work.

\section*{Acknowledgments}
We are grateful to Ansis Rosmanis and Jordi Weggemans for their feedback on the manuscript, and we thank Gabriel Marin-Sanchez and Enrico Rinaldi for helpful discussions.

\bibliography{bibliography}
\bibliographystyle{utphys}

\appendix

\section{Alternative proof of Lemma \ref{lem:parity_or_unbounded}}
\label{app:composition}
In this appendix, we provide an alternative proof of Lemma \ref{lem:parity_or_unbounded}.
\begin{proof}
First, we give a lower bound by considering the reduction from \PARITY[n]. Provided $n$ bits $(x_1,\dots,x_n)$, we pad each bit $x_t$ with $m-1$ zeros as $X_t:=(x_t,0,\dots,0)\in\{0,1\}^m$. Thus, it is reduced to a \PARITYOR[n][m] problem. Since solving \PARITYOR[n][m] also solves \PARITY[n], which has the tight bound of $\Theta(n)$, we obtain the lower bound $\Omega(n)$ for \PARITYOR[n][m]. 

To find an upper bound, we construct a sign-representation for a partial Boolean function $f$ and use $\mathsf{UQ}(f) = \ceil{\text{sdeg}(f)/2}$ to obtain its unbounded-error query complexity. The partial \OR function $f_{\OR}$ admits the following sign representation,
\begin{align}
    g_t(X_t):=\sum_{i=1}^m x_{t,i}-\frac{1}{2},
\end{align}
i.e., it satisfies $g_t(X_t)=\frac{1}{2}>0$ if $f_{\OR}=1$ and otherwise $g_t=-\frac{1}{2}<0$.
Then, we sign represent the function $f_{\PAR\text{-}\OR}$ as
\begin{align}
    h(X):=(-1)^{n+1}\prod_{t=1}^{n}g_t(X_t),
\end{align}
which satisfies $h(X)>0$ when $f_{\PAR\text{-}\OR}=1$ and otherwise $h(X)<0$. 
Therefore, there exists a quantum algorithm that solves unbounded-error \PARITYOR[n][m] with $\mathsf{UQ}(f_{\PAR\text{-}\OR}) = \ceil{\text{sdeg}(f_{\PAR\text{-}\OR})/2} \le \ceil{n/2}$ queries, which gives the upper bound $O(n)$.
Combining the obtained lower bound and upper bound, we arrive at the tight query complexity bound $\Theta(n)$.
\end{proof}

\section{Construction of the oracle}
\label{app:sparse_oracle}

In this appendix, we construct the sparse oracle $\mathcal{P}_A$ defined in Def.~\ref{def:A_oracle} to determine the number of $\mathcal{O}^{(2)}_X$ calls [Eq.~\eqref{eq:access_party_or}] required for a single $\mathcal{P}_A$ call.
As the position of non-zero entries of $A$, constructed from the dense matrices $H_t$, is independent of the bit string $X$, $\mathcal{P}_{A}^{\rm pos}$ does not use $\mathcal{O}^{(2)}_X$.
Therefore, we only consider the construction of $\mathcal{P}_A^{\rm val}$ in this appendix.
Moreover, because we are only interested in the number of calls to $\mathcal{O}^{(2)}_X$, we focus on $\mathcal{P}_B^{\rm val}$. 
To obtain $\mathcal{P}_A^{\rm val}$, we only need to add diagonal entries of 1 and multiply $\mathcal{P}_B^{\rm val}$ by $-\frac{1}{3}\e^{-1/n}$ using arithmetic operations.

Recalling that the matrix acts on three registers, our problem reduces to that of constructing the  oracles
\begin{align}
\label{eq:P_B^val}
\mathcal{P}_B^{\rm val} \ket{k,i,t}\ket{k',j,t'} \ket{z} = \ket{k,i,t}\ket{k',j,t'} \ket{z \oplus B_{(k,i,t),(k',j,t')}},
\end{align}
from constant calls to $\mathcal{O}^{(2)}_X$.

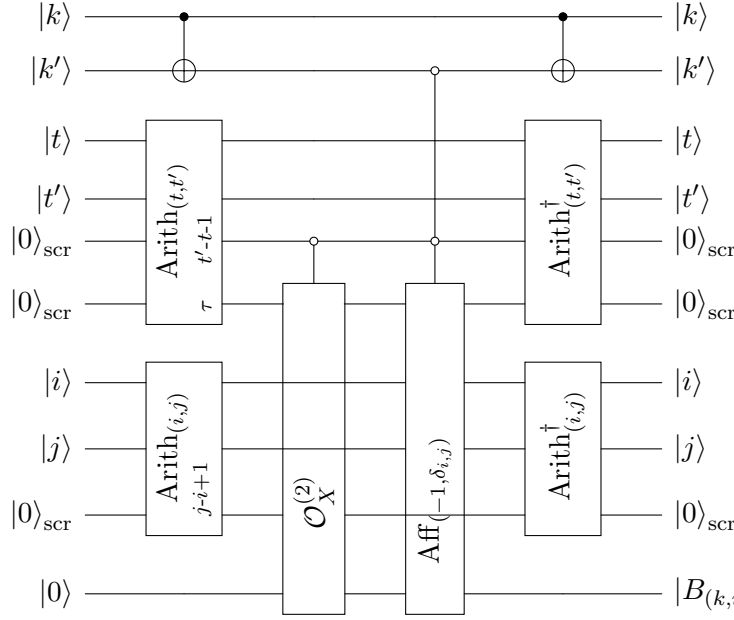
\begin{figure}
\centering
\begin{quantikz}[row sep=0.5cm, column sep=0.8cm, thin lines,transparent]
    \lstick{$\ket{k}$} 
    & \ctrl{1} 
    &
    &
    & \ctrl{1}
    &\rstick{$\ket{k}$}
    \\
    \lstick{$\ket{k'}$} 
    & \targ{}
    &  
    & \octrl{3} 
    & \targ{}
    &\rstick{$\ket{k'}$}
    \\
    \lstick{$\ket{t}$} 
    & \gate[4, label style={xshift=-0.15cm}, disable auto height][1cm]{\rotatebox[origin=l]{90}{Arith$_{(t,t')}$}}
    &
    &  
    & \gate[4, disable auto height][1cm]{\rotatebox[origin=l]{90}{Arith$^\dag_{(t,t')}$}}
    &\rstick{$\ket{t}$}
    \\
    \lstick{$\ket{t'}$} 
    &
    &  
    &  
    & 
    &\rstick{$\ket{t'}$}
    \\
    \lstick{$\ket{0}_{\rm scr}$} 
    & \gateoutput{\rotatebox[origin=c]{90}{$t'$-$t$-$1$}}
    & \octrl{1}
    & \octrl{1} 
    &  
    & \rstick{$\ket{0}_{\rm scr}$} 
    \\
    \lstick{$\ket{0}_{\rm scr}$} 
    & \gateoutput{\rotatebox[origin=c]{90}{$\tau$}}
    & \gate[5, label style={yshift=-0.7cm}, disable auto height][0.8cm]{\rotatebox[origin=c]{90}{$\mathcal{O}^{(2)}_X$}} 
    & \gate[5,label style={yshift=-0.7cm}, disable auto height][0.77cm]{\rotatebox[origin=l]{90}{${\rm Aff}_{(-1,\delta_{i,j})}$}}
    &
    &\rstick{$\ket{0}_{\rm scr}$}
    \\
    \lstick{$\ket{i}$} 
    & \gate[3, label style={xshift=-0.15cm}, disable auto height][1cm]{\rotatebox[origin=l]{90}{Arith$_{(i,j)}$}}
    & \linethrough
    & 
    & \gate[3, disable auto height][1cm]{\rotatebox[origin=l]{90}{Arith$^\dag_{(i,j)}$}}
    &\rstick{$\ket{i}$}
    \\
    \lstick{$\ket{j}$} 
    &
    & \linethrough    
    & 
    & 
    &\rstick{$\ket{j}$}
    \\
    \lstick{$\ket{0}_{\rm scr}$} 
    & \gateoutput{\rotatebox[origin=l]{90}{$j$-$i$+$1$}}
    &  
    & \linethrough 
    &
    & \rstick{$\ket{0}_{\rm scr}$}
    \\
    \lstick{$\ket{0}$} 
    &
    &  
    &  
    & 
    &\rstick{$\ket{B_{(k,i,t),(k',j,t')}}$}
\end{quantikz}
\caption{Quantum circuit for implementing $\mathcal{P}_B^{\rm val}$~\eqref{eq:P_B^val}. The registers with subscript ``scr'' serve as scratch registers that store intermediate results and are uncomputed at the end. See the main text for the definition of each box representing a quantum operation. In the circuit diagram, we omit the final write-out to $\ket{z}$ followed by uncomputation.}
\label{fig:circ_P_A}
\end{figure}

We initialize register $\ket{z} = \ket{0\dots0}$ where the number of qubits depends on the desired precision. Note that only for $t' = t+1 \pmod{3n}$ the entries $B_{(k,i,t),(k',j,t')}$ are non-zero. In other words, for all $t' \neq t+1$, the register $\ket{z}$ already contains the correct value. Thus we have 
\begin{align}   
    \mathcal{P}_B^{\rm val}\ket{k,i,t}\ket{k',j,t+1}\ket{0}=\begin{cases}\ket{k,i,t}\ket{k',j,t+1}\ket{(H_t)_{(k,i),(k',j)}}&\text{if } t\in [1, n],\\
    \ket{k,i,t}\ket{k',j,t+1}\ket{I_{(k,i),(k',j)}}&\text{if }t\in [n+1, 2n],\\
    \ket{k,i,t}\ket{k',j,t+1}\ket{(H_{3n-t+1})_{(k,i),(k',j)}}&\text{if }t\in[2n+1, 3n].
    \end{cases}
\end{align}
Next, the elements of $H_t$ are expressed as

\begin{align}
    \ket{k,i}\ket{k',j}\ket{(H_t)_{(k,i),(k',j)}}
    =\begin{cases}
    \ket{k,i}\ket{k\oplus1,j}\ket{x_{t,j-i+1}} & \text{if } k'=k \oplus 1, \\
    \ket{k,i}\ket{k,j}\ket{\delta_{i,j}-x_{t,j-i+1}} & \text{if } k' = k.
    \end{cases}
\end{align}

Now we construct a circuit to implement this.
We first use arithmetic operations, Arith$_{(t,t')}$ in Fig.~\ref{fig:circ_P_A}, to prepare $\ket{t'-t}$ and $\ket{\tau}$ from $\ket{t}\ket{t'}$, where
\begin{align}
    \tau=\begin{cases}
        t&\text{if } t\in[1,n],\\
        0&\text{if } t\in[n+1,2n],\\
        n-t+1&\text{if } t\in[2n+1,3n],
    \end{cases}
\end{align}
and $\ket{j-i+1}$ from $\ket{i}\ket{j}$ (Arith$_{(i,j)}$ in Fig.~\ref{fig:circ_P_A}).
We define $x_{0,j}=0$ for all $j$ and extend the definition of $\mathcal{O}_X^{(2)}$ to include $t=0$.
Then we can construct a circuit as in Fig.~\ref{fig:circ_P_A}, where $\mathrm{Aff}_{(-1,\frac{1}{m})}$ represents an arithmetic operation that performs the affine mapping $x_{\tau,j-i+1}\mapsto \delta_{i,j}-x_{\tau,j-i+1}$ conditioned on the control register $t'-t-1=0$, $\tau\ne0$ and $k'=k$. The Arith$^\dag$ boxes on the last layer in Fig.~\ref{fig:circ_P_A} uncompute the scratch registers.
Therefore, we can construct $\mathcal{P}_B^{\rm val}$ with two queries to (controlled) $\mathcal{O}^{(2)}_X$.

\end{document}